\newcommand{\myparatight}[1]{\smallskip\noindent{\bf {#1}:}~}
\newcommand{\alg}{$\mathsf{BRACE}~$}
\newcommand{\algns}{$\mathsf{BRACE}$}
\newtheorem{assumption}{Assumption}
\newtheorem{thm}{Theorem}
\newtheorem*{remark}{Remark}
\gdef\@copyrightpermission{
  \begin{minipage}{0.3\columnwidth}
   \href{https://creativecommons.org/licenses/by/4.0/}{\includegraphics[width=0.90\textwidth]{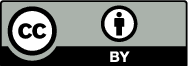}}
  \end{minipage}\hfill
  \begin{minipage}{0.7\columnwidth}
   \href{https://creativecommons.org/licenses/by/4.0/}{This work is licensed under a Creative Commons Attribution International 4.0 License.}
  \end{minipage}
  \vspace{5pt}
}
\begin{document}

\title{Byzantine-Robust Federated Learning over Ring-All-Reduce Distributed Computing}

\author{Minghong Fang}
\affiliation{
	\institution{University of Louisville}
	\city{Louisville}
        % \state{KY}
         \country{USA}
}

\author{Zhuqing Liu}
\affiliation{
	\institution{University of North Texas}
	\city{Denton}
        % \state{TX}
        \country{USA}
}

\author{Xuecen Zhao}
\authornote{Xuecen Zhao conducted this research while she was an intern under the supervision of Minghong Fang.}
\affiliation{
	\institution{University of Washington}
	\city{Seattle}
        % \state{WA}
        \country{USA}
}

\author{Jia Liu}
\affiliation{
	\institution{The Ohio State University}
	\city{Columbus}
        % \state{OH}
        \country{USA}
}

\begin{abstract}
Federated learning (FL) has gained attention as a distributed learning paradigm for its data privacy benefits and accelerated convergence through parallel computation. Traditional FL relies on a server-client (SC) architecture, where a central server coordinates multiple clients to train a global model, but this approach faces scalability challenges due to server communication bottlenecks. To overcome this, the ring-all-reduce (RAR) architecture has been introduced, eliminating the central server and achieving bandwidth optimality. However, the tightly coupled nature of RAR's ring topology exposes it to unique Byzantine attack risks not present in SC-based FL. Despite its potential, designing Byzantine-robust RAR-based FL algorithms remains an open problem. To address this gap, we propose \alg (\ul{B}yzantine-\ul{r}obust ring-\ul{a}ll-redu\ul{ce}), the first RAR-based FL algorithm to achieve both Byzantine robustness and communication efficiency. We provide theoretical guarantees for the convergence of \alg under Byzantine attacks, demonstrate its bandwidth efficiency, and validate its practical effectiveness through experiments. Our work offers a foundational understanding of Byzantine-robust RAR-based FL design.
\end{abstract}

\begin{CCSXML}
<ccs2012>
   <concept>
       <concept_id>10002978.10003006</concept_id>
       <concept_desc>Security and privacy~Systems security</concept_desc>
       <concept_significance>500</concept_significance>
       </concept>
 </ccs2012>
\end{CCSXML}

\ccsdesc[500]{Security and privacy~Systems security}

\keywords{Federated Learning, Ring-All-Reduce, Robustness}

\maketitle

% !TEX root = mainfile.tex

\section{Introduction} \label{sec:intro}

Federated learning (FL)~\cite{mcmahan2017communication} is a distributed learning paradigm where multiple clients collaboratively train a global model under the coordination of a central server. 
This parallel computation across many clients accelerates the convergence of the training process. However, the server-client (SC) architecture used in FL presents two key limitations. 
% First, the central server is vulnerable to single-point-of-failure risks, making it a target for Byzantine attacks~\cite{yin2018byzantine,blanchard2017machine,fang2020local,shejwalkar2021manipulating,bagdasaryan2020backdoor,zhang2024poisoning} where malicious clients can manipulate the global model via data or model poisoning. 
First, the central server faces single-point-of-failure risks~\cite{fang2024byzantine} and is susceptible to Byzantine attacks~\cite{yin2018byzantine,blanchard2017machine,fang2020local,shejwalkar2021manipulating,bagdasaryan2020backdoor,zhang2024poisoning}, where malicious clients can compromise the global model through data or model poisoning.
While Byzantine-robust algorithms~\cite{blanchard2017machine,yin2018byzantine,bernstein2018signsgd,ozdayi2021defending,cao2020fltrust,fang2024byzantine,fang2022aflguard,fang2025FoundationFL} have been proposed to mitigate such risks, their effectiveness varies. Second, the SC framework suffers from a communication bottleneck, as the server must exchange model parameters and updates with all clients. 
This causes data traffic to scale linearly with clients, limiting FL scalability.

To address the communication scalability challenge in FL, the ``ring-all-reduce'' (RAR)~\cite{patarasuk2009bandwidth,sergeev2018horovod} architecture has been introduced. In this model, clients form a ring topology to share and aggregate updates, eliminating the need for a central server and reducing the communication bottleneck. The key advantage of RAR is that the amount of data exchanged by each client is independent of the number of clients, making it bandwidth optimal. This feature has made RAR increasingly popular in FL systems, supported by major frameworks like TensorFlow and PyTorch, and accelerating tasks such as ImageNet classification and BERT pre-training. However, the ring topology introduces new challenges, particularly in terms of Byzantine attacks. Since clients only communicate with their immediate neighbors, they lack full access to other clients' updates, making it difficult to detect malicious updates. 
This limitation makes existing Byzantine-robust algorithms designed for server-client architectures inapplicable to RAR-based systems, highlighting the need for new, specialized Byzantine-robust algorithms for RAR.

\myparatight{Our work}
In this paper, we aim to bridge this critical gap in Byzantine-robust FL algorithm design.
The main contribution of this paper is that we propose a Byzantine-robust RAR-based FL algorithm called \alg (\ul{B}yzantine-\ul{r}obust ring-\ul{a}ll-redu\ul{ce}).
To our knowledge, \alg is the first Byzantine-robust RAR-based FL algorithm in the literature.
\alg retains the core features of the RAR architecture, where clients form a logical ring and only communicate with their immediate neighbors. 
To enhance resilience against Byzantine attacks, \alg introduces four key innovations: 1) clients apply 1-bit quantization (element-wise sign) to local updates, reducing sensitivity to poisoned data or updates; 2) clients exchange sub-vectors of local updates with neighbors, performing reduction operations before forwarding the results; 3) a consensus mechanism ensures that only a majority consensus determines the aggregated sign, further mitigating the impact of poisoning attacks; and 4) upon completing the consensus and exchange process, all clients receive an identical aggregated local update, which is used to update the global model.
The key contributions of this paper are summarized as follows:

\begin{list}{\labelitemi}{\leftmargin=1em \itemindent=-0.08em \itemsep=.2em}

\item 
We introduce the \alg algorithm, the first RAR-based FL method that achieves both robustness and communication efficiency.

\item 
We theoretically prove that \alg achieves an \(\mathcal{O}(1/T)\) convergence rate under Byzantine attacks and retains the RAR architecture's bandwidth-optimal property, where $T$ is the total number of training rounds. Additionally, \alg maintains an \(\mathcal{O}(1)\) per-round communication cost, ensuring scalability.

\item 
We conduct experiments on two real-world datasets, validating that \alg effectively mitigates Byzantine attacks and reduces communication costs in RAR-based FL systems.
\end{list}

\begin{figure*}[t]
	\centering
	\includegraphics[width=0.75\textwidth]{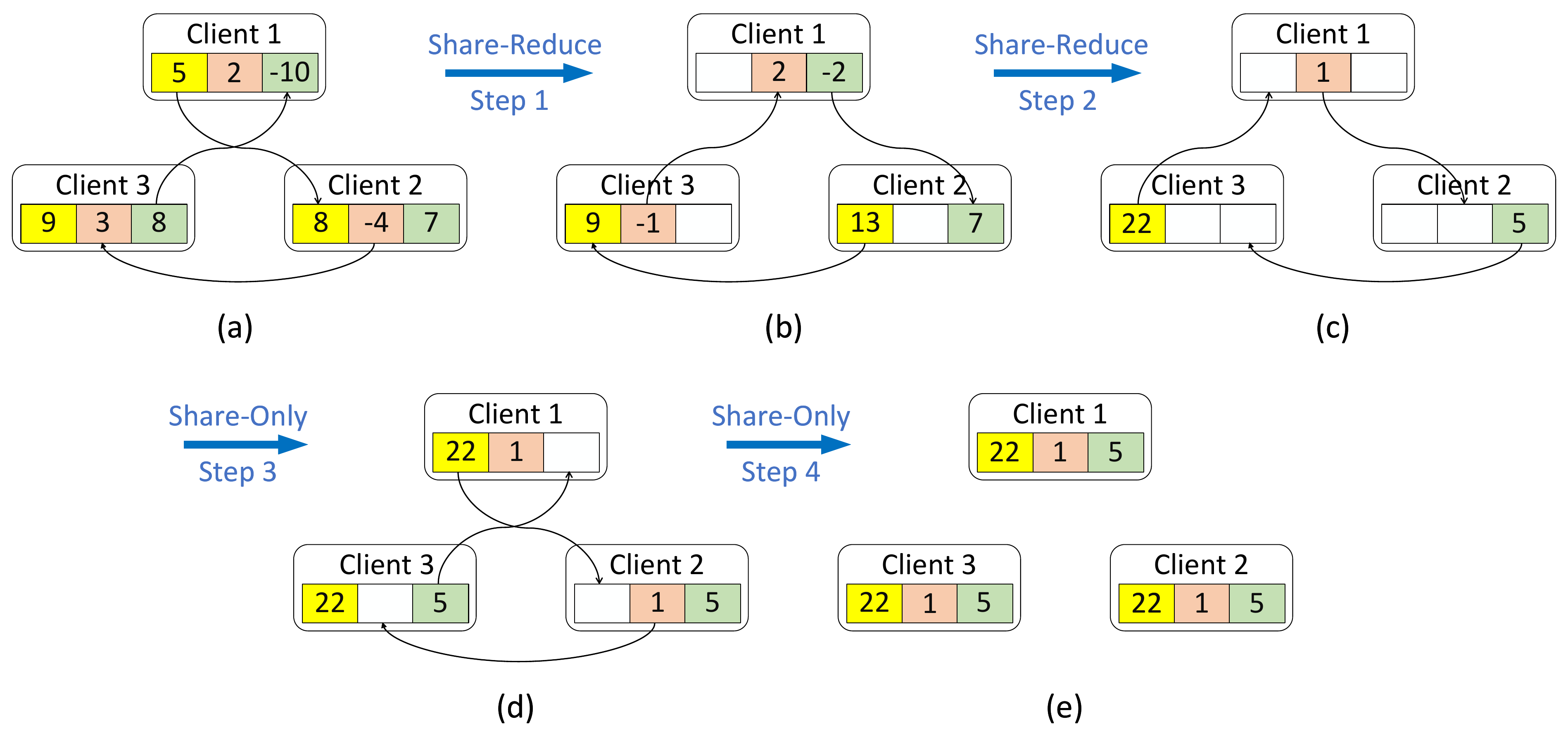}
	\caption{Illustration of the ring-all-reduce (RAR) process.}
	\label{Ring_AllReduce_sys}
	\vspace{-0.10in}
\end{figure*}
% !TEX root = mainfile.tex

\section{Preliminaries}
\label{sec:preliminaries}

\subsection{Federated Learning (FL): An Overview} 
\label{sec:background}

The standard FL system follows a server-client (SC) architecture, where a central parameter server coordinates the training across $n$ clients. Each client $i \in [n]$ has a local dataset, denoted as $\mathcal{D}_i$, and trains a global model collaboratively with the other clients, where $[n]$ denotes the set $\left\{ 1, 2,...,n\right\}$. 
The primary objective of a FL system is to address the optimization problem
$
\arg\min_{\bm{w} \in \mathbb{R}^d} f(\bm{w}) = \frac{1}{n}\sum_{i\in [n]}f_i(\bm{w}),
$
where $d$ is the dimension of $\bm{w}$, $f_i$ is the loss function on client $i$.
FL tackles this problem by distributing the computation across clients.
In each training round $t$, the server broadcasts the current model parameter $\bm{w}^t$ to all clients. Each client $i$ computes a local stochastic gradient $\bm{g}_i^t$ using its dataset and the received model, then sends it back to the server. The server aggregates the gradients using an aggregation rule $\mathsf{GAR(\cdot)}$ and updates the model parameter as
$
\bm{w}^{t+1} =  \bm{w}^{t} -\eta \cdot \mathsf{GAR}(\bm{g}_1^t, \bm{g}_2^t,...,\bm{g}_n^t),
$
where $\eta$ is the learning rate.
The standard aggregation rule, known as the ``Mean'' operation, averages all received stochastic gradients as $\mathsf{GAR}(\bm{g}_1^t, \bm{g}_2^t,...,\bm{g}_n^t) = \frac{1}{n} \sum_{i \in [n]} \bm{g}_i^t$.

\subsection{The Ring-All-Reduce (RAR) Architecture} \label{sec:Ring_arch}

Although the SC architecture speeds up training by enabling parallel computation, it suffers from a key limitation: the server becomes a communication bottleneck as the number of clients increases. In the SC setup, the server broadcasts model parameters to all clients, which compute and return their gradients.
Suppose the stochastic gradient size for each dimension is $m$ bits. For a $d$-dimensional model, this results in a communication cost of $mnd$ bits per round, where $n$ is the number of clients, leading to a linear increase in communication as $n$ grows.
This bottleneck hampers scalability.

To overcome this, the ring-all-reduce (RAR) architecture was proposed, where clients form a {\em ring} to exchange and aggregate model parameters without a central server, eliminating the bottleneck. 
Specifically, in an $n$-client RAR system, each client has an upstream and downstream neighbor and communicates with them in a fixed order, splitting its gradient into $n$ roughly equal-sized sub-vectors.
For instance, in a three-client system (illustrated in Fig.~\ref{Ring_AllReduce_sys}(a)) with gradients $\bm{g}_1=[5, 2, -10]^\top$, $\bm{g}_2=[8, -4, 7]^\top$, and $\bm{g}_3=[9, 3, 8]^\top$, RAR employs ``Share-Reduce'' and ``Share-Only'' phases to aggregate and exchange gradients among clients as follow:

\myparatight{Phase I (Share-Reduce)}In the ``Share-Reduce'' phase, each client receives a sub-vector from its upstream neighbor, reduces it by adding it to its local gradient, and sends the result to the downstream client. For example, in Figs.~\ref{Ring_AllReduce_sys}(a)--(b), Client 1 receives the third sub-vector from Client 3 and sends the reduced result of $-2$ to Client 2. 
After \(n-1\) steps, each client possesses a sub-vector of the final aggregated gradient. As illustrated in Fig.~\ref{Ring_AllReduce_sys}(c), Client 1 holds the fully aggregated gradient for the second sub-vector. Specifically, in the pink block, the computation yields \(2 - 4 + 3 = 1\).

\myparatight{Phase II (Share-Only)}The second phase, ``Share-Only'', involves clients sharing data in a ring without performing gradient reduction. After $n-1$ steps, each client obtains the fully aggregated stochastic gradient, $\sum_{i \in [n]} \bm{g}_i^t$, as shown in Fig.~\ref{Ring_AllReduce_sys}(e). Once the RAR process is complete, each client uses the aggregated gradient to update the model parameter {\em individually} as
$
\label{global_sgd_ring}
\bm{w}_i^{t+1} =  \bm{w}_i^{t} - \frac{\eta}{n} \sum\nolimits_{i \in [n]} \bm{g}_i^t,
$
where $\bm{w}_i^t$ is the model parameter of client $i \in [n]$ in the round $t$.

In the RAR process, in every training round, each client sends $\frac{md}{n}$ bits of data for $2(n-1)$ steps, resulting in a total of $\frac{2md(n-1)}{n}$ bits sent. This amount of data is asymptotically independent of the number of clients $n$ as $n$ grows, meaning there is no communication bottleneck even as the client count increases. This ``bandwidth-optimal'' feature \cite{patarasuk2009bandwidth} has made the RAR architecture increasingly popular in the FL community.

% !TEX root = mainfile.tex

\section{The \alg Algorithm} 
\label{sec:alg}

\begin{figure*}[t!]
	\centering
	\includegraphics[width=0.7\textwidth]{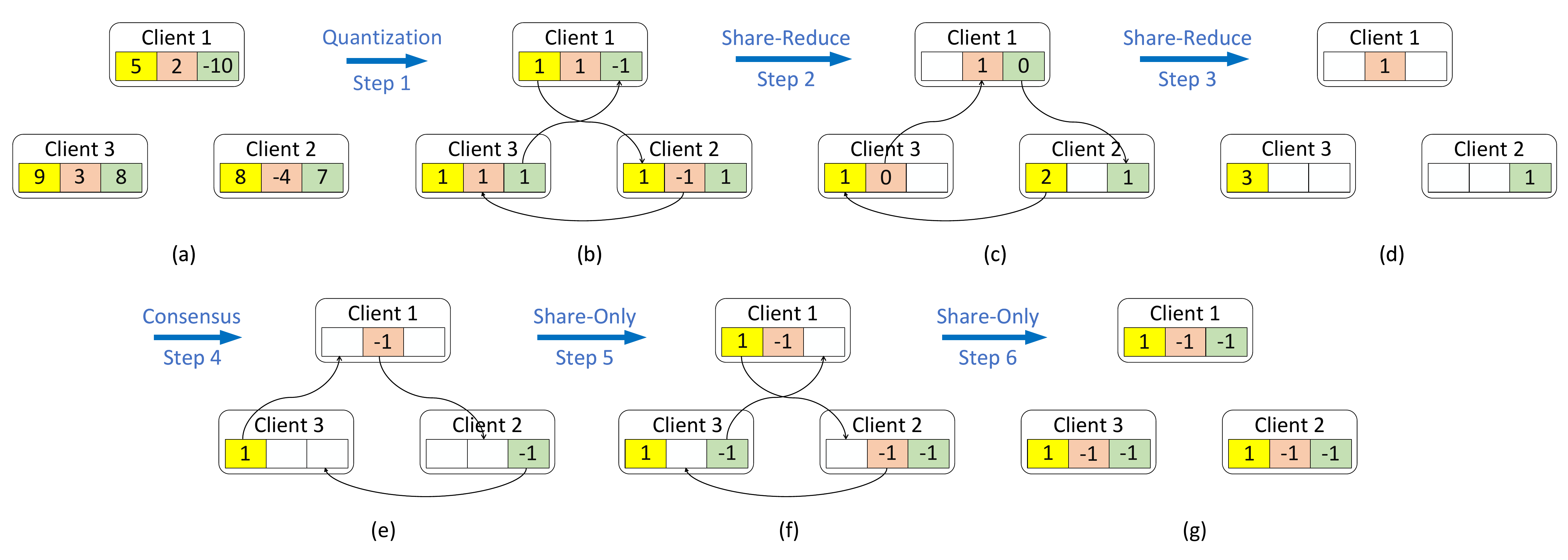}
\caption{An example of the \alg algorithm.}
	\label{Ring_AllReduce_sys_gurad}
	\vspace{-.10in}
\end{figure*}

RAR-based FL systems, though ``bandwidth-optimal'', face unique Byzantine attack risks absent in traditional SC-based FL. The ring topology in RAR, lacking a centralized server, creates a strong coupling among clients, restricting communication to upstream and downstream neighbors. Each client only has a ``partial view'' of the global model, both when receiving and sending sub-vectors of local model updates. This limitation prevents access to the full updates of other clients, making it difficult to distinguish between malicious and benign updates, thereby increasing vulnerability.
For example, if the attacker compromises Client 1 and alters its gradient from $[5, 2, -10]^\top$ to $[5, 2, -200]^\top$ in Fig.~\ref{Ring_AllReduce_sys}(a), the final aggregated gradient becomes $[22, 1, -185]^\top$.
Note that this paper focuses on attacks in which clients modify gradients (either through data or model poisoning) before the RAR process begins, while still faithfully performing the RAR communication steps. This can be ensured using an integrity assurance framework~\cite{roth2022nvidia} to safeguard the execution of FL tasks.
In this paper, we propose \algns, a Byzantine-robust RAR-based FL algorithm where clients form a ring and communicate only with their neighbors. Each client splits its gradient into $n$ sub-vectors, and \alg performs four phases for gradient exchange and aggregation in each training round.

\myparatight{Phase I (Quantization)}Each client applies quantization by taking the element-wise sign of the stochastic gradient (see Step 1 in Fig.~\ref{Ring_AllReduce_sys_gurad}). This process serves two purposes: 1) it reduces sensitivity to malicious gradients from Byzantine attacks, and 2) it lowers the number of bits exchanged during the RAR communication process.

\myparatight{Phase II (Share-Reduce)}In this phase, each client follows the same procedure as in the ``Share-Reduce'' phase of traditional RAR. Specifically, each client obtains a quantized gradient sub-vector from its upstream neighbor, combines it with its local gradient sub-vector through reduction (such as addition), and then forwards the updated sub-vector to its downstream neighbor.
By the conclusion of the Share-Reduce phase, each client retains a portion of the fully aggregated, quantized sub-vector gradient, see Fig.~\ref{Ring_AllReduce_sys_gurad}(d).

\myparatight{Phase III (Consensus)}In this phase, clients apply an additional quantization step to the aggregated quantized sub-vector gradient. 
As \alg operates on each dimension independently. Thus for the dimension $k \in [d]$, we perform the following consensus mapping:
\begin{align}
\mathsf{GAR}(\bm{g}_1^t, \bm{g}_2^t,...,\bm{g}_n^t)[k] = 
\begin{cases}
1, & \text{if } \sum_{i \in [n]} \text{sign}(\bm{g}_i^t[k]) > \lambda, \\
-1, & \text{if } \sum_{i \in [n]} \text{sign}(\bm{g}_i^t[k]) \leq \lambda,
\end{cases} 
\end{align}
where \(\bm{g}_i^t[k]\) represents the \(k\)-th dimension of the gradient \(\bm{g}_i^t\), and \(\mathsf{GAR}(\bm{g}_1^t, \bm{g}_2^t, \ldots, \bm{g}_n^t)[k]\) denotes the \(k\)-th dimension of the aggregated gradient. 
\(\lambda\) is a threshold to determine how much agreement among client gradients is necessary for the aggregated gradient in a specific dimension to be positive or negative.
As an illustration, setting \(\lambda = 2\) yields the mapping results depicted in Fig.~\ref{Ring_AllReduce_sys_gurad}(e).

\myparatight{Phase IV (Share-Only)}The final phase is the ``Share-Only'' phase, where each client performs the same steps as in the ``Share-Only'' phase of standard RAR. After $n-1$ steps, all clients receive the same fully aggregated quantized gradient, as illustrated in Fig.~\ref{Ring_AllReduce_sys_gurad}(g).

Next, we analyze the communication cost of \alg algorithm.  
Because both ``Quantization'' and ``Consensus'' are performed locally, these operations introduce no additional communication overhead. 
During the ``Share-Reduce'' phase, each client transmits \(\frac{md}{n}\) bits per step over \((n-1)\) steps, and \(\frac{d}{n}\) bits per step over \((n-1)\) steps in the ``Share-Only'' phase (note that our \alg compresses gradients by reducing the gradient for each dimension to 1 bit before the ``Share-Only'' phase). This results in a total communication cost of \(\frac{d(n-1)(m+1)}{n}\) bits. Since \(m > 1\), this cost is less than \(\frac{2md(n-1)}{n}\), making the per-round communication of \alg more efficient than the standard RAR process. 
Furthermore, as the number of clients \(n\) increases, \(\frac{d(n-1)(m+1)}{n}\) remains asymptotically constant. This implies that \alg achieves a per-round communication cost of \(\mathcal{O}(1)\), independent of the number of clients, thereby ensuring scalability and eliminating communication bottlenecks.
This confirms that \alg preserves the ``bandwidth-optimal'' property of the RAR architecture.
Finally, we summarize the per-round communication costs of SC, RAR and \alg architectures in Table~\ref{comm_cost}.

\begin{table}[t!]
\label{comm_cost}
	\caption{Per-round communication costs of the SC, RAR, and \alg architectures with $n$ clients.}
	\centering
	\label{comm_cost}
		\scriptsize 
	\begin{tabular}{|c|c|c|c|}  \hline
		& SC  & RAR & \alg \\  \hline
		Cost & $mnd$  & $\frac{2md(n-1)}{n}$  & $\frac{d(n-1)(m+1)}{n}$  \\  \hline
	\end{tabular}
	 \vspace{-.15in}
\end{table}

\section{Formal Security Analysis}
\label{sec:sec_ana}

\begin{assumption}
\label{assumption_1}
Gradient is Lipschitz continuous for $L>0$, i.e.,
\begin{align}
\|\nabla f(x) - \nabla f(y)\| \leq L \|x - y\|, \quad \forall x, y \in \mathbb{R}^d. \nonumber
\end{align}
\end{assumption}

\begin{thm}
\label{theorem_1}
If \(0 \leq Pr\left( I_{\left(\sum_{i \in [n]} \text{sign}\left(\bm{g}_i^t[k]\right) > \lambda\right)} = 0 \mid \mathcal{F}_t \right) < 0.5\) for \(i \in [n]\) and \(k \in [d]\), where \(\mathcal{F}_t\) represents the filtration of all random variables at round \(t\), and \(I\) is an indicator function, then under Assumption~\ref{assumption_1}, \alg achieves the following convergence result:
\begin{align}
\frac{1}{T} \sum_{t=1}^{T} \|\nabla f(\bm{w}^t)\| \leq \frac{f(\bm{w}^1) - f^*}{\eta T} + \frac{L\eta^2}{2},\nonumber
\end{align}
where \(T\) is the total training rounds, \(\bm{w}^1\) is the initial model, and \(f^*\) is the minimum of \(f\), $\| \cdot\|$ denotes the $\ell_2$ norm.
\end{thm}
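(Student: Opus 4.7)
The plan is to follow a signSGD-with-majority-vote style convergence analysis, adapted to the fact that after Phases I--IV of \alg, every client ends up with the \emph{same} consensus vector $\hat{\bm{g}}^t := \mathsf{GAR}(\bm{g}_1^t,\ldots,\bm{g}_n^t)$ whose entries all lie in $\{-1,+1\}$. Because every client performs the identical update $\bm{w}^{t+1} = \bm{w}^{t} - \eta \hat{\bm{g}}^t$, the analysis reduces to studying a single trajectory $\{\bm{w}^t\}$ rather than tracking per-client drift, which keeps the argument close to the classical smooth nonconvex descent template.

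First I would invoke Assumption~\ref{assumption_1} to write the standard $L$-smooth descent inequality
\begin{align}
f(\bm{w}^{t+1}) \leq f(\bm{w}^{t}) - \eta \langle \nabla f(\bm{w}^{t}), \hat{\bm{g}}^{t}\rangle + \tfrac{L\eta^2}{2}\|\hat{\bm{g}}^{t}\|^2.\nonumber
\end{align}
Next I would take conditional expectation with respect to the filtration $\mathcal{F}_t$. The $\|\hat{\bm{g}}^t\|^2$ term is easy: each coordinate of $\hat{\bm{g}}^t$ is $\pm 1$ by the consensus rule in Phase~III, so it is bounded by a constant that absorbs into the $\tfrac{L\eta^2}{2}$ term stated in the theorem. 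The real work is in lower-bounding the cross term $\mathbb{E}[\langle \nabla f(\bm{w}^t), \hat{\bm{g}}^t\rangle \mid \mathcal{F}_t]$ coordinatewise: using the hypothesis $\Pr(I_{(\sum_i \operatorname{sign}(\bm{g}_i^t[k])>\lambda)} = 0 \mid \mathcal{F}_t) < 1/2$, I would argue that for each dimension $k$ the consensus output $\hat{\bm{g}}^t[k]$ agrees with $\operatorname{sign}(\nabla f(\bm{w}^t)[k])$ with probability strictly greater than $1/2$, yielding a strictly positive expected alignment of the form $\mathbb{E}[\hat{\bm{g}}^t[k]\cdot \nabla f(\bm{w}^t)[k]\mid \mathcal{F}_t] \geq c\,|\nabla f(\bm{w}^t)[k]|$ for some $c>0$ that can be folded into the learning rate.

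After this per-coordinate step, summing over $k\in[d]$ converts the inner product into (a scalar multiple of) $\|\nabla f(\bm{w}^t)\|$ on the right-hand side. I would then take total expectation, telescope the inequality from $t=1$ to $T$, use $f(\bm{w}^{T+1})\geq f^*$ to cancel the terminal term, and divide through by $\eta T$ to obtain
\begin{align}
\tfrac{1}{T}\sum_{t=1}^{T} \|\nabla f(\bm{w}^t)\| \leq \tfrac{f(\bm{w}^1)-f^*}{\eta T} + \tfrac{L\eta^2}{2},\nonumber
\end{align}
which matches the claim and shows the $\mathcal{O}(1/T)$ rate.

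The main obstacle will be the coordinatewise alignment step, Step~3 above. The hypothesis is phrased directly in terms of the event $\{\sum_i \operatorname{sign}(\bm{g}_i^t[k]) > \lambda\}$ rather than in terms of agreement with $\operatorname{sign}(\nabla f(\bm{w}^t)[k])$, so I must carefully argue --- likely via a case split on the sign of $\nabla f(\bm{w}^t)[k]$ together with the unbiasedness of the benign clients' stochastic gradients --- that the majority-vote output of the Phase~III consensus actually correlates with the true gradient sign, despite Byzantine clients being free to flip their $\operatorname{sign}$ contributions. Quantifying this Byzantine slack and choosing $\lambda$ consistently with the hypothesis is what makes the argument non-trivial; once it is in hand, the telescoping and the bound on $\|\hat{\bm{g}}^t\|^2$ are routine.
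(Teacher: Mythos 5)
Your overall skeleton (the $L$-smooth descent inequality, conditional expectation with respect to $\mathcal{F}_t$, and telescoping) is exactly the paper's, but the step you yourself flag as ``the main obstacle'' is precisely the step you have not carried out, and it is also where you diverge from the paper. The paper does \emph{not} attempt the per-coordinate alignment argument you propose. Instead, it encodes the consensus output as a diagonal $\pm 1$ matrix $2\bm{I}_t - \mathbb{I}$ and then explicitly restricts attention to an all-or-nothing scenario in which, at each round, either every coordinate has $\bm{I}_t[k,k]=1$ or every coordinate has $\bm{I}_t[k,k]=0$; this collapses the cross term to a scalar $(1 - 2p_{\bm{I}_t=0\mid\mathcal{F}_t})$ multiplying $\|\nabla f(\bm{w}^t)\|$, after which the hypothesis $p_{\bm{I}_t=0\mid\mathcal{F}_t} < 0.5$ guarantees descent. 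Your route is more general and more honest about what actually needs to be shown, but as written it is a plan, not a proof: the inequality $\mathbb{E}[\hat{\bm{g}}^t[k]\cdot\nabla f(\bm{w}^t)[k]\mid\mathcal{F}_t] \geq c\,|\nabla f(\bm{w}^t)[k]|$ is asserted to be derivable from the hypothesis, yet the hypothesis only states that $\Pr(I_{(\sum_i \mathrm{sign}(\bm{g}_i^t[k])>\lambda)}=0\mid\mathcal{F}_t)<0.5$, i.e., that the consensus outputs $+1$ with probability above one half in every coordinate --- it says nothing about agreement with $\mathrm{sign}(\nabla f(\bm{w}^t)[k])$. You correctly observe this mismatch, but you do not resolve it, and it cannot be resolved from the stated hypothesis alone without the reinterpretation of the indicator (as ``sign not altered relative to the true gradient direction'') that the paper implicitly adopts from the RLR analysis it follows. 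As it stands, your Step 3 is a genuine gap.

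Two further points would need attention even if the alignment step were filled in. First, $\|\hat{\bm{g}}^t\|^2 = d$ since every coordinate is $\pm 1$; this does not ``absorb into $\frac{L\eta^2}{2}$'' --- it produces $\frac{L\eta^2 d}{2}$, and the dimension dependence must either be carried through or argued away (the paper sidesteps this by asserting $\|2\bm{I}_t-\mathbb{I}\|=1$, i.e., by silently using the spectral norm of the diagonal matrix rather than the Euclidean norm of the update vector). Second, your alignment constant $c$ would be of the form $1-2p$ and can be arbitrarily close to $0$; ``folding it into the learning rate'' does not recover the theorem's bound, which has coefficient exactly $1$ on $\|\nabla f(\bm{w}^t)\|$. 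The paper has the same difficulty (it drops the factor $(1-2p_{\bm{I}_t=0\mid\mathcal{F}_t})$ between its penultimate display and its conclusion), so you are in good company, but a complete write-up of your route would have to either keep the factor $c$ in the final rate or justify discarding it.
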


\begin{proof}
	The proof is relegated to Appendix~\ref{sec:appendix_1}.
\end{proof}
% \vspace{-0.10in}
\begin{remark} 
Theorem~\ref{theorem_1} shows that our \alg converges to a stationary point and attains an \(\mathcal{O}(1/T)\) finite-time convergence rate. 
\end{remark}

% !TEX root = mainfile.tex

\begin{table*}[htb!]
	\centering
	\caption{Results of different defenses. The results of Backdoor attack are in the form of ``test error rate / attack success rate".}
	\centering
	\addtolength{\tabcolsep}{-3.1pt}
	\label{tab:error_rate}
	\scriptsize 
	\subfloat[Fashion-MNIST dataset.]
	{
		\begin{tabular}{|c|c|c|c|c|c|c|c|c|c|}
		\hline
		Attack & \multicolumn{1}{c|}{RAR} & \multicolumn{1}{c|}{Krum}  & \multicolumn{1}{c|}{Median} & \multicolumn{1}{c|}{Trim-mean} & \multicolumn{1}{c|}{signSGD} & \multicolumn{1}{c|}{RLR}& \multicolumn{1}{c|}{\alg}\\
        \hline
        None attack &  0.19 & 0.21   & 0.23 &  0.23 & 0.19  & 0.19  & 0.19 \\
        \hline
        LF attack & 0.19 & 0.24  & 0.25  & 0.29  & 0.24 & 0.24  & 0.19 \\
        \hline
        Gaussian attack & 0.27 &0.22  & 0.23 & 0.25& 0.23& 0.22  &  0.19\\
        \hline
        Krum attack & 0.19 & 0.90  & 0.42 & 0.27  & 0.24 & 0.26  & 0.19\\
        \hline
        Trim attack& 0.33  & 0.22  & 0.34 & 0.35  & 0.31 & 0.36  & 0.19\\
        \hline
        Min-Max attack& 0.54 & 0.22  & 0.25 & 0.25  & 0.27 & 0.24  & 0.19\\
        \hline
        Min-Sum attack& 0.48 & 0.21  & 0.28 & 0.26  & 0.22 & 0.27  & 0.19\\
        \hline
        Backdoor attack & 0.20 / 0.94 &  0.22 / 0.03 & 0.28 / 0.06 &  0.24 / 0.10 & 0.20 / 0.03 &  0.20 / 0.03 & 0.19 / 0.03 \\
        \hline
        Adaptive attack & 0.61 & 0.24  & 0.26 & 0.24  & 0.24 & 0.20  & 0.19\\
        \hline
		\end{tabular}%
	}
      \quad
	\vspace{-0.02in}
       \subfloat[CIFAR-10 dataset.]
	{
		\begin{tabular}{|c|c|c|c|c|c|c|c|c|c|}
			\hline
		Attack	& \multicolumn{1}{c|}{RAR} & \multicolumn{1}{c|}{Krum}  & \multicolumn{1}{c|}{Median} & \multicolumn{1}{c|}{Trim-mean} & \multicolumn{1}{c|}{signSGD} & \multicolumn{1}{c|}{RLR}& \multicolumn{1}{c|}{\alg}\\
			\hline
			None attack & 0.24  & 0.26  & 0.26 & 0.25  & 0.24& 0.25  &0.24\\
			\hline
			LF attack &0.31 & 0.26 & 0.32 & 0.29& 0.27& 0.27  & 0.24 \\
			\hline
			Gaussian attack & 0.89 & 0.27  & 0.26 &0.25   & 0.24&  0.26 & 0.24 \\
			\hline
			Krum attack &0.25  & 0.62  &0.29  & 0.31  & 0.26 &0.26   & 0.24\\
			\hline
			Trim attack& 0.42 &  0.29 & 0.44 & 0.53   & 0.38 & 0.41  &0.25 \\
			\hline
                 Min-Max attack& 0.37 &  0.29 & 0.31 &  0.29 & 0.26 &  0.30 & 0.24\\
			\hline
                Min-Sum attack& 0.29 &  0.32 & 0.35 & 0.30 & 0.26 & 0.29  &0.24\\
			\hline
			Backdoor attack & 0.47 / 0.89 & 0.33 / 0.05  & 0.27 / 0.36 &  0.29 / 0.33 & 0.29 / 0.08 &  0.27 / 0.03  & 0.25 / 0.02\\
			\hline
			Adaptive attack & 0.39 & 0.32  & 0.30 & 0.36 &0.28 & 0.31  & 0.26\\
			\hline
		\end{tabular}%
	}
		% \vspace{-0.3in}
        \vspace{-0.2in}
\end{table*}%

\begin{figure*}[!t]
	\centering
	\includegraphics[scale = 0.45]{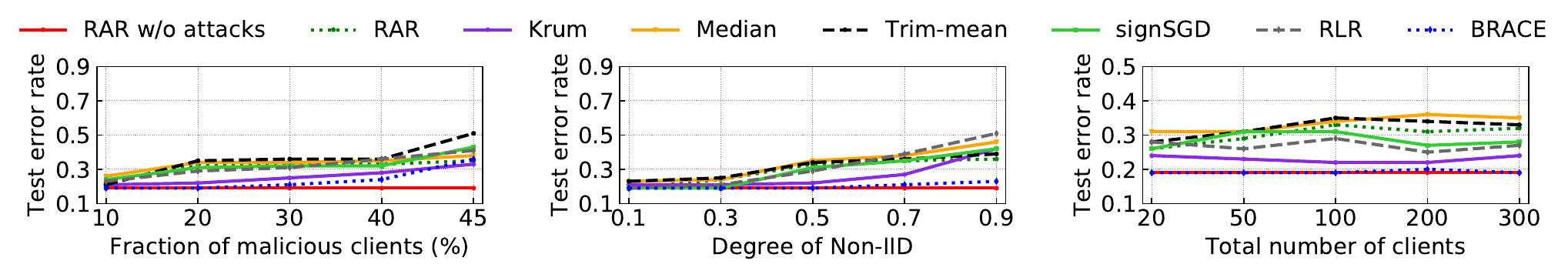}
	\caption{Impact of malicious client fraction, Non-IID degree, and total clients using the Fashion-MNIST dataset.}
\label{attack_size}
\vspace{-0.07in}
\end{figure*}

\begin{figure}[!t]
	\centering
	\includegraphics[scale = 0.45]{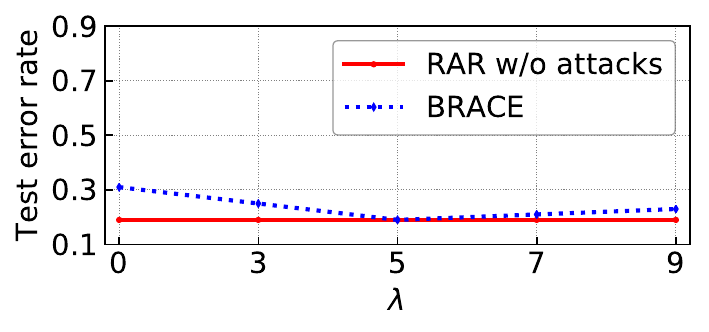}
	\caption{Impact of $\lambda$ using the Fashion-MNIST dataset.}
\label{Effect_of_lambda}
\vspace{-0.22in}
\end{figure}

\vspace{-0.02in}
\section{Experimental Evaluation}  \label{sec:exp}

\subsection{Experimental Setup}
\label{sec:Setup_main}

We compare \alg with six baselines, including the standard Ring-All-Reduce (RAR) method~\cite{patarasuk2009bandwidth,sergeev2018horovod} and five SC-based aggregation rules: Krum~\cite{blanchard2017machine}, Median~\cite{yin2018byzantine}, Trimmed-mean~\cite{yin2018byzantine}, signSGD~\cite{bernstein2018signsgd}, and RLR~\cite{ozdayi2021defending}. Experiments are conducted on two datasets, Fashion-MNIST and CIFAR-10, using \(n=100\) clients, 20\% of which are assumed malicious. We evaluate eight poisoning attacks: Label flipping (LF) attack~\cite{yin2018byzantine}, Gaussian attack~\cite{blanchard2017machine}, Krum attack~\cite{fang2020local}, Trim attack~\cite{fang2020local}, Min-Max attack~\cite{shejwalkar2021manipulating}, Min-Sum attack~\cite{shejwalkar2021manipulating}, Backdoor attack~\cite{bagdasaryan2020backdoor}, and Adaptive attack~\cite{shejwalkar2021manipulating}. 
The models used are a CNN for Fashion-MNIST, with an input of \(28 \times 28 \times 1\), two convolutional layers (30 and 50 filters), max-pooling, a 100-neuron fully connected layer, and a softmax output for 10 classes, and ResNet-20 for CIFAR-10. Training was conducted for 300 rounds on Fashion-MNIST and 500 rounds on CIFAR-10.
To simulate the Non-IID data distribution inherent in FL systems, we follow the method and Non-IID degree is set to 0.5 from~\cite{fang2020local}. 
$\lambda$ is set to 5 in our \algns.
Evaluation metrics include test error rates and attack success rates for Backdoor attack, while only test error rates are used for other attacks. Lower values for these metrics indicate stronger defenses, assuming all clients use the same initial model.

\subsection{Experimental Results} 
%	\vspace{-0.15in}
\label{sec:Results_main}

\myparatight{\alg is effective}Table~\ref{tab:error_rate} summarizes test error rates and attack success rates for defenses on Fashion-MNIST and CIFAR-10.
The ``None attack'' in Table~\ref{tab:error_rate} means all clients are benign.
We make several important observations from Table~\ref{tab:error_rate}.
First, our \alg method is effective in non-adversarial settings.
Secondly, \alg is robust against Byzantine attacks and outperforms existing baselines. In contrast, current methods lack this robustness, leading to higher test error rates and attack success rates under attacks.
Note that, \alg outperforms signSGD by enforcing stronger consensus, ensuring reliable gradient directions in heterogeneous environments. Its thresholding mechanism requires majority consensus to determine the aggregated sign, limiting the influence of malicious clients. In contrast, RLR, which considers the absolute value of the sum and treats positive and negative agreement equally, allows malicious clients to cancel out legitimate contributions, making it more susceptible to manipulation.
Third, \alg approach demonstrates robustness against the Adaptive attack, which performs best when targeting \alg but may not achieve the same effectiveness against other methods due to its tailored design for \algns.

\myparatight{Ablation studies}Fig.~\ref{attack_size} illustrates the impact of malicious client fraction, Non-IID degree, and total clients on different defenses using the Trim-attack and Fashion-MNIST dataset. Our \alg is robust against up to 40\% malicious clients but fails at 45\%, a scenario deemed impractical as per prior studies~\cite{fang2020local,shejwalkar2021manipulating}. Additionally, \alg effectively secures FL systems with highly heterogeneous data, where greater Non-IID degrees indicate more heterogeneity. It also performs well across varying client counts under different poisoning attacks. Fig.~\ref{Effect_of_lambda} highlights the role of $\lambda$ in \algns, showing that small $\lambda$ values allow noisy or adversarial updates to dominate, while excessively large $\lambda$ can exclude legitimate updates, particularly in highly heterogeneous settings.

\myparatight{Communication costs of different methods}As shown in Table~\ref{comm_cost}, the per-round communication cost is \(mnd\) for SC-based aggregation rules like Krum, Median, and Trimmed-mean, and \(nd\) for signSGD and RLR, which use 1-bit gradient quantization (with \(m=1\)). The RAR method incurs a cost of \(\frac{2md(n-1)}{n}\), while our \alg requires \(\frac{d(n-1)(m+1)}{n}\). Here, \(m\) represents the gradient size per dimension in bits, \(n\) is the total number of clients, and \(d\) is the gradient dimension. Additionally, signSGD and RLR can be adapted to the RAR architecture, following the procedure discussed in Section~\ref{sec:alg}, resulting in the same communication cost as our method: \(\frac{d(n-1)(m+1)}{n}\). 
Note that RAR-based signSGD and RLR achieve the same classification performance as their SC-based counterparts, differing only in reduced communication overhead.
% !TEX root = mainfile.tex

\section{Conclusion}
\label{sec:conclusion}

We introduce \algns, the first RAR-based FL algorithm, ensuring robustness and communication efficiency. \alg resolves  server bottlenecks in traditional SC FL systems, with theoretical convergence guarantees under Byzantine settings and strong performance demonstrated through experiments on real-world datasets.

\begin{acks}
This work was supported in part by NSF grants CAREER CNS-2110259, CNS-2112471, IIS-2324052, DARPA YFA D24AP00265, ONR grant N00014-24-1-2729, AFRL grant PGSC-SC-111374-19s, and Cisco Research Award PO-USA000EP312336.
\end{acks}

\bibliographystyle{plain}
\bibliography{refs}

\begin{thebibliography}{10}

\bibitem{bagdasaryan2020backdoor}
Eugene Bagdasaryan, Andreas Veit, Yiqing Hua, Deborah Estrin, and Vitaly
  Shmatikov.
\newblock How to backdoor federated learning.
\newblock In {\em AISTATS}, 2020.

\bibitem{bernstein2018signsgd}
Jeremy Bernstein, Jiawei Zhao, et~al.
\newblock signsgd with majority vote is communication efficient and fault
  tolerant.
\newblock In {\em ICLR}, 2019.

\bibitem{blanchard2017machine}
Peva Blanchard, El~Mahdi El~Mhamdi, et~al.
\newblock Machine learning with adversaries: Byzantine tolerant gradient
  descent.
\newblock In {\em NeurIPS}, 2017.

\bibitem{cao2020fltrust}
Xiaoyu Cao, Minghong Fang, Jia Liu, and Neil~Zhenqiang Gong.
\newblock Fltrust: Byzantine-robust federated learning via trust bootstrapping.
\newblock In {\em NDSS}, 2021.

\bibitem{fang2020local}
Minghong Fang, Xiaoyu Cao, Jinyuan Jia, and Neil Gong.
\newblock Local model poisoning attacks to byzantine-robust federated learning.
\newblock In {\em USENIX Security Symposium}, 2020.

\bibitem{fang2022aflguard}
Minghong Fang, Jia Liu, Neil~Zhenqiang Gong, and Elizabeth~S Bentley.
\newblock Aflguard: Byzantine-robust asynchronous federated learning.
\newblock In {\em ACSAC}, 2022.

\bibitem{fang2025FoundationFL}
Minghong Fang, Seyedsina Nabavirazavi, Zhuqing Liu, Wei Sun,
  Sundararaja~Sitharama Iyengar, and Haibo Yang.
\newblock Do we really need to design new byzantine-robust aggregation rules?
\newblock In {\em NDSS}, 2025.

\bibitem{fang2024byzantine}
Minghong Fang, Zifan Zhang, Prashant Khanduri, Jia Liu, Songtao Lu, Yuchen Liu,
  Neil Gong, et~al.
\newblock Byzantine-robust decentralized federated learning.
\newblock In {\em CCS}, 2024.

\bibitem{mcmahan2017communication}
Brendan McMahan, Eider Moore, Daniel Ramage, Seth Hampson, and Blaise~Aguera
  y~Arcas.
\newblock Communication-efficient learning of deep networks from decentralized
  data.
\newblock In {\em AISTATS}, 2017.

\bibitem{ozdayi2021defending}
Mustafa~Safa Ozdayi, Murat Kantarcioglu, and Yulia~R Gel.
\newblock Defending against backdoors in federated learning with robust
  learning rate.
\newblock In {\em AAAI}, 2021.

\bibitem{patarasuk2009bandwidth}
Pitch Patarasuk and Xin Yuan.
\newblock Bandwidth optimal all-reduce algorithms for clusters of workstations.
\newblock In {\em Journal of Parallel and Distributed Computing}, 2009.

\bibitem{roth2022nvidia}
Holger~R Roth, Yan Cheng, et~al.
\newblock Nvidia flare: Federated learning from simulation to real-world.
\newblock {\em arXiv preprint arXiv:2210.13291}, 2022.

\bibitem{sergeev2018horovod}
Alexander Sergeev and Mike Del~Balso.
\newblock Horovod: fast and easy distributed deep learning in tensorflow.
\newblock {\em arXiv preprint arXiv:1802.05799}, 2018.

\bibitem{shejwalkar2021manipulating}
Virat Shejwalkar and Amir Houmansadr.
\newblock Manipulating the byzantine: Optimizing model poisoning attacks and
  defenses for federated learning.
\newblock In {\em NDSS}, 2021.

\bibitem{yin2018byzantine}
Dong Yin, Yudong Chen, Ramchandran Kannan, and Peter Bartlett.
\newblock Byzantine-robust distributed learning: Towards optimal statistical
  rates.
\newblock In {\em ICML}, 2018.

\bibitem{zhang2024poisoning}
Zifan Zhang, Minghong Fang, Jiayuan Huang, and Yuchen Liu.
\newblock Poisoning attacks on federated learning-based wireless traffic
  prediction.
\newblock In {\em IFIP/IEEE Networking Conference}, 2024.

\end{thebibliography}

% !TEX root = mainfile.tex

\appendix

\section{Proof of Theorem~\ref{theorem_1}} \label{sec:appendix_1}

The following proof is partially inspired by~\cite{ozdayi2021defending}.
We represent \( I_{\left(\sum_{i \in [n]} \text{sign}\left(\bm{g}_i^t[k]\right) > \lambda\right)} \) as a \( d \times d \) matrix \( \bm{I}_t \), where \( d \) represents the dimension of \( \bm{w}^t \). For all \( k, j \in [d] \), \( \bm{I}_t[k, k] = I_{\left(\sum_{i \in [n]} \text{sign}\left(g_{i,t}^{k}\right) > \lambda\right)} \) and \( \bm{I}_t[k, j] = 0 \) when \( k \neq j \).
This means that our proposed \alg at round \( t \) can be expressed as follows:
\begin{align}
& \mathsf{GAR}(\bm{g}_1^t, \bm{g}_2^t,...,\bm{g}_n^t)[k] = 
\begin{cases}
1, & \text{if } \sum_{i \in [n]} \text{sign}(\bm{g}_i^t[k]) > \lambda, \\
-1, & \text{if } \sum_{i \in [n]} \text{sign}(\bm{g}_i^t[k]) \leq \lambda,
\end{cases}  \\
& = 2 \cdot I_{\left(\sum_{i \in [n]} \text{sign}\left(\bm{g}_i^t[k]\right) > \lambda\right)} - 1
= 2 \cdot \bm{I}_t[k,k] - 1.
\end{align}

Let \( f_{i}(\bm{w}) = \mathbb{E}_{\mathcal{D}_i}f_{i}\left(\bm{w}, \xi_{i}\right) \) represent the loss function of the \( i \)-th client, where \( \xi_{i} \) denotes the randomness introduced by local batch variability, and these random variables are independent of each other. Thus, we have \( \bm{g}_i^t = \nabla f_{i}(\bm{w}_i^t, \xi_i^t) \). 
Additionally, let \( \mathbb{E}_{\mathcal{D}_i} (\bm{g}_i^t \mid \mathcal{F}_t) = \nabla f_i(\bm{w}_i^t) \), where \( \mathcal{F}_t \) represents the filtration generated by all random variables at round \( t \). Specifically, \( \mathcal{F}_t \) is a sequence of increasing \(\sigma\)-algebras \( \mathcal{F}_s \subseteq \mathcal{F}_t \) for all \( s < t \).

Therefore, the update rule can be expressed as:
\begin{align}
\label{update_rule}
\bm{w}^{t+1} = \bm{w}^t - \eta \left(2 \bm{I}_t - \mathbb{I}\right), 
\end{align}
where \(\mathbb{I}\) denotes a \(d \times d\) identity matrix.
By Assumption 1, we obtain:
\begin{align}
\label{Assumption_equation}
f(\bm{w}^{t+1}) \leq f(\bm{w}^{t}) + \left\langle \nabla f(\bm{w}^t), \bm{w}^{t+1} - \bm{w}^{t} \right\rangle + \frac{L}{2} \| \bm{w}^{t+1} - \bm{w}^{t}\|^2.
\end{align}

By combining Eq.~(\ref{update_rule}) and Eq.~(\ref{Assumption_equation}), it follows that:
\begin{align}
\label{update_rule_further}
f(\bm{w}^{t+1}) \leq f(\bm{w}^t) - \eta \left\langle \nabla f(\bm{w}^t), \left(2 \bm{I}_t - \mathbb{I}\right)\right\rangle + \frac{L}{2} \left\|\eta \left(2 \bm{I}_t - \mathbb{I}\right)\right\|^2.
\end{align}

Note that \( \|2\bm{I}_t - \mathbb{I}\| = 1 \).  
Taking the conditional expectation with respect to the filtration \( \mathcal{F}_t \) in Eq.~(\ref{update_rule_further}), we obtain:
\begin{align}
\mathbb{E}_{\mathcal{F}_t}  f(\bm{w}^{t+1}) \leq \mathbb{E} f(\bm{w}^t)- \eta\left\langle \nabla f(\bm{w}^t), \mathbb{E}_{\mathcal{F}_t} \left(2 I_{t} - \mathbb{I}\right) \right\rangle + \frac{L\eta^2}{2}. 
\end{align}

We obtain a convergence result in a scenario where, for a given \( t \), we alter the signs of either none or all elements. Specifically, for each \( k = 1, \ldots, d \), we set \( \bm{I}_t[k, k] = 0 \) or \( \bm{I}_t[k, k] = 1 \), respectively. For ease of notation, we denote these two cases as \( \bm{I}_t = 1 \mid \mathcal{F}_t \) and \( \bm{I}_t = 0 \mid \mathcal{F}_t \).
Let \( p_{\bm{I}_{t}=1 \mid \mathcal{F}_{t}} \) and \( p_{\bm{I}_{t}=0 \mid \mathcal{F}_{t}} \) represent the probabilities that \( \bm{I}_t = 1 \mid \mathcal{F}_t \) and \( \bm{I}_t = 0 \mid \mathcal{F}_t \), respectively.
Thus, it follows that:
\begin{align}
\mathbb{E}_{\mathcal{F}_t} \left( 2 \bm{I}_{t} - \mathbb{I}\right)= - p_{\bm{I}_{t}=0 \mid \mathcal{F}_{t}} +  p_{\bm{I}_{t}=1 \mid \mathcal{F}_{t}}  =(1 - 2 p_{\bm{I}_{t}=0 \mid \mathcal{F}_{t}}).
\end{align}

Therefore, it follows that:
\begin{align}
\label{lala}
\mathbb{E}  f(\bm{w}^{t+1}) \leq \mathbb{E}f(\bm{w}^t)- \eta(1 - 2 p_{\bm{I}_{t}=0 \mid\mathcal{F}_{t}})\| \nabla f(\bm{w}^t) \| + \frac{L\eta^2}{2}.
\end{align}

Now, considering that \( \Pr(\bm{I}_t[1,1] = \dots = \bm{I}_t[d,d] = 0 \mid \mathcal{F}_t) \leq \Pr(\bm{I}_t[i,i] = 0 \mid \mathcal{F}_t) < 0.5 \), we can apply telescoping and rearrange Eq.~(\ref{lala}) to derive the following:
\begin{align}
\frac{1}{T} \sum_{t=1}^{T} \|\nabla f(\bm{w}^t)\| \leq \frac{f(\bm{w}^1) - f^*}{\eta T } + \frac{L\eta^2}{2}.
\end{align}

Thus, we complete the proof.

\end{document}